\documentclass{article}

\usepackage{geometry}                
\usepackage{graphicx}

\usepackage{epstopdf}

\usepackage{enumerate}

\usepackage{amsmath}
\usepackage{amssymb}
\usepackage{amsthm}
\usepackage{url}
\usepackage{color}

\usepackage{tikz-cd}

\usepackage{float}
\floatstyle{boxed} 
\restylefloat{figure}
\usepackage[section]{placeins}

\usepackage{chngcntr}
\counterwithin{figure}{section}

\usepackage{pifont}
\usepackage{stmaryrd}
\usepackage{dsfont}
\usepackage{textcomp}
\usepackage{verbatim}
\usepackage{accents}
\usepackage{wasysym}
\usepackage{csquotes}
\usepackage{booktabs}  

\newtheorem{thm}{Theorem}
\newtheorem*{thm*}{Theorem}
\newtheorem{prop}[thm]{Proposition}
\newtheorem{lemma}[thm]{Lemma}

\newtheorem*{mainthm*}{Main Theorem}
\newtheorem*{mainlemma*}{Main Lemma}
\newtheorem{cor}[thm]{Corollary}

\newtheorem*{claim*}{Claim}

\newtheorem{conj}[thm]{Conjecture}
\newtheorem*{conj*}{Conjecture}

\theoremstyle{definition}

\newtheorem{dfn}[thm]{Definition}
\newtheorem*{dfn*}{Definition}

\newtheorem*{fix*}{Fixation}

\newtheorem*{tgtfeat*}{Target features}

\newtheorem*{ass*}{Assumption}

\newtheorem{que}{Question}
\newtheorem*{que*}{Question}

\newtheorem*{chal*}{Challenge}

\newtheorem*{aim*}{Aim}

\newtheorem*{diag*}{Diagnosis}
\newtheorem*{mque*}{Main Question}

\newtheorem*{rem*}{Remark}

\newtheorem*{appr*}{Approach}
\newtheorem*{Copernican*}{Copernican Principle}
\newtheorem*{Correctness*}{Correctness Condition}

\newcommand{\Lang}{\mathcal{L}}

\newcommand{\df}{\mathrm{df}}

\newcommand{\dt}{\hspace{2pt}}

\newcommand{\PA}{\mathsf{PA}}

\usepackage{cancel}
\usepackage[linesnumbered,ruled,vlined]{algorithm2e}
\usepackage{bussproofs}

\newcommand{\ThA}[1]{\mathrm{Th}_{{#1}}}
\newcommand{\vdashNH}{\vdash_{\cancel{H}}}

\newcommand{\MT}{\mathrm{MT}}

\newcommand{\Appe}{\mathrm{App}}
\newcommand{\Repl}{\mathrm{Rep}}

\title{Intentic Semantics for Potentialist Truthmaking}
\author{Paul Gorbow}
\date{}                                           %

\begin{document}

\maketitle

\section{Complete Intentic Semantics for Predicate Logic}\label{Sec:Complex_intentic}

\subsection{Non-Hypothetical Logic}

Let $\Lang$ be a purely relational language. Recursively, let $\Lang_0 = \Lang$, and for each $i < \omega$, let $\Lang_{i+1}$ be $\Lang_i$ augmented with a constant symbol $c_{\varphi}$, for each $\varphi(x) \in \Lang_i$ whose sole free variable is $x$. Let $\Lang_\omega = \bigcup_{i < \omega} \Lang_i$. This work applies to both classical and intuitionistic first-order logic, but for expository concreteness we choose classical logic here. The deductive relation of classical logic is denoted $\vdash$. We employ the standard system of natural deduction (as defined e.g. in \cite[Ch. 2.1]{TS00}), and we treat LEM as an axiom schema. 
The logic obtained from the natural deduction by excluding ${\rightarrow} I$, and replacing $\vee E$ and $\exists E$ by $\exists E'$ and $\vee E'$ below, is called \emph{non-hypothetical logic}, since it refrains from reasoning on the basis of undischarged hypotheses. Its deductive relation is denoted $\vdashNH$.
\begin{align}
	&\frac{
		\varphi \vee \psi \quad \varphi \rightarrow \theta \quad \psi \rightarrow \theta
	}{
	\theta
	}
	\tag{$\vee E'$}
	\\[2ex]
	&\frac{
		\exists x \varphi(x) \quad \varphi(c_\varphi) \rightarrow \theta 
	}{
		\theta
	}
	\tag{$\exists E'$}
\end{align}

\subsection{Intentic States and their Truthmaking Semantics} \label{Subsec:Formal_TM}

For any finite set of formulas $d$, we define $[d]$ as the closure of $d$ under $\vdashNH$. Moreover, for $\varphi \in \Lang$, $d + \varphi$ is shorthand for $d \cup \{\varphi\}$.

Given $i < \omega$, a \emph{basic intentic state $b$ (over $\Lang$ of order $i$)} is a set of formulas of $\Lang_i$, such that $b = [d]$, for some finite subset $d \subseteq b$. Note that since $\vdashNH$ has the rule $\wedge E$, for any basic intentic state $b$, there is a single formula $\sigma_b$ (namely the conjunction of the formulas in $d$ above), such that $b = [\{\sigma_b\}]$. For the sake of notational convenience, we fix such a $\sigma_b$ for every basic intentic state $b$.

An \emph{intentic state $u$ (over $\Lang$)} is defined by $\in$-recursion as an ordered pair $\langle b(u), H(u) \rangle$, where $b(u)$ is a basic intentic state and $H(u)$ is a finite set of intentic states, such that for each $s \in H(u)$, there is $\varphi \in \Lang_\omega$, such that: 
\begin{align*}
	b(u) + \varphi = b(s)
\end{align*}
$b(u)$ is called \emph{the base state of $u$}. The elements of $H(u)$ are called \emph{hypothetical states} of $u$.

\begin{dfn}\label{Dfn:MT}
Let $u, v$ be intentic states over $\Lang$. We define the set $\MT(u)$, of formulas \emph{made true by $u$}, by $\in$-recursion as the $\subseteq$-least set, such that for any $\varphi, \theta \in \Lang_\omega$:
	\begin{enumerate}[{\rm (i)}]
		\item\label{Dfn:MT_base} $b(u) \subseteq \MT(u)$
		\item\label{Dfn:MT_closed_children} $\textrm{For any } s \in H(u) \textrm{, if } b(u) + \varphi \vdashNH b(s) \textrm{ and } \theta \in \MT(s) \textrm{, then } (\varphi \rightarrow \theta) \in \MT(u).$
		\item\label{Dfn:MT_closed_loc_log} $\MT(u) \textrm{ is closed under $\vdashNH$.}$
	\end{enumerate}
We write $u \Vdash \varphi$, and say that \emph{$u$ is a truth-maker of $\varphi$}, if $\varphi \in \MT(u)$.
\end{dfn}

Using the above notion of truth-making, we define two natural partial orders on intentic states, and a syntactic $\Vdash$-relation:

\begin{dfn}
	Let $u, v$ be intentic states over $\Lang$. 
	\begin{enumerate}[{\rm (a)}]
		\item We say that $v$ \emph{$\Vdash$-extends $u$}, denoted $u \leq_\Vdash v$, if $\MT(u) \subseteq \MT(v)$.
		\item By $\in$-recursion, we define that $v$ is a \emph{fine extension} of $u$, denoted $u \leq_F v$, if $b(u) = b(v)$, and for each $s \in H(u)$, there is $t \in H(v)$, such that $s \leq_F t$.	
		\item We write $\Gamma \Vdash \varphi$ if every intentic state $u$, such that $u \Vdash \Gamma$, has a fine extension $u'$, such that $u' \Vdash \varphi$.
	\end{enumerate}
\end{dfn}

\begin{lemma}\label{Lem:Fine_ext_makes_true}
	Let $u, v, w$ be intentic states over $\Lang$. If $u \leq_F v$ and $v \leq_F w$, then $u \leq_F w$. Moreover, if $u \leq_F v$, then $u \leq_\Vdash v$.
\end{lemma}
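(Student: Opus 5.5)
Both claims will be proved by $\in$-induction, which is the natural tool since intentic states, $\leq_F$ and $\MT$ are all defined by $\in$-recursion.

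\emph{Transitivity.} We prove, by $\in$-induction on $u$, that for all $v, w$: if $u \leq_F v$ and $v \leq_F w$, then $u \leq_F w$. First, $b(u) = b(v) = b(w)$. Next, let $s \in H(u)$. The hypothesis $u \leq_F v$ gives $t \in H(v)$ with $s \leq_F t$. Then $v \leq_F w$ gives $r \in H(w)$ with $t \leq_F r$. Since $s$ is an element of an element of $u$, the induction hypothesis applies to $s$ and yields $s \leq_F r$. Hence $u \leq_F w$.

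\emph{Fine extension implies $\Vdash$-extension.} We prove, by $\in$-induction on $u$, that $u \leq_F v$ implies $\MT(u) \subseteq \MT(v)$. Since $\MT(u)$ is the $\subseteq$-least set satisfying clauses (i)--(iii) of Definition~\ref{Dfn:MT}, it suffices to show that $\MT(v)$ satisfies those clauses \emph{with respect to $u$}:
\begin{enumerate}[{\rm (i)}]
	\item $b(u) = b(v) \subseteq \MT(v)$.
	\item Suppose $s \in H(u)$, $b(u) + \varphi \vdashNH b(s)$ and $\theta \in \MT(s)$. Pick $t \in H(v)$ with $s \leq_F t$. By the induction hypothesis applied to $s$, we get $\theta \in \MT(t)$. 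Moreover $b(t) = b(s)$ and $b(v) = b(u)$, so $b(v) + \varphi \vdashNH b(t)$. Clause (ii) for $v$ then gives $(\varphi \rightarrow \theta) \in \MT(v)$.
	\item $\MT(v)$ is closed under $\vdashNH$ by its own definition.
\end{enumerate}
By minimality, $\MT(u) \subseteq \MT(v)$.

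The only delicate point is the induction setup. In the second part, the inductive hypothesis must be applied to the hypothetical state $s$ for an \emph{arbitrary} fine extension $t$ of $s$. So the statement being proved by induction on $u$ must quantify universally over $v$; it cannot fix $v$ in advance. The same holds for $v$ and $w$ in the first part. Once this is arranged, both arguments are routine unfoldings of the definitions. One also relies on reading ``$\subseteq$-least'' as the intersection of all sets satisfying (i)--(iii) for the given $u$; this is legitimate because $\MT(s)$ for $s \in H(u)$ is already defined by recursion.
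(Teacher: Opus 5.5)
Your proposal is correct and takes essentially the same route as the paper. Transitivity is proved by $\in$-induction unwinding the definition of $\leq_F$, and the second claim by checking that $\MT(v)$ satisfies clauses (i)--(iii) relative to $u$, with clause (ii) handled via the induction hypothesis $s \leq_\Vdash t_s$ for a chosen $t_s \in H(v)$. Your explicit remark that the induction statement must quantify over all $v$ (and $w$) spells out what the paper's induction hypothesis leaves implicit.
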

\begin{proof}
	The transitivity claim follows directly by $\in$-induction from the definition of $\leq_F$ and the transitivity the $=$-relation.
	
	Assume that $u \leq_F v$.
	To see that $u \leq_\Vdash v$, it suffices to verify that for any $\varphi, \theta \in \Lang_\omega$:
	\begin{enumerate}[{\rm (i)}]
		\item $b(u) \subseteq \MT(v)$
		\item $\textrm{For any } s \in H(u) \textrm{, if } b(u) + \varphi \vdashNH b(s) \textrm{ and } \theta \in \MT(s) \textrm{, then } (\varphi \rightarrow \theta) \in \MT(v).$
		\item $\MT(v) \textrm{ is closed under $\vdashNH$.}$
	\end{enumerate}
	The first holds by $b(v) = b(u)$ and (\ref{Dfn:MT_base}) in the definition of $\MT(v)$. The third holds by (\ref{Dfn:MT_closed_loc_log}) in the definition of $\MT(v)$. 
	
	To establish the second, we proceed by $\in$-induction on the structure of $u$. Since $u \leq_F v$, we have for each $s \in H(u)$ that there is $t_s \in H(v)$, such that $s \leq_F t_s$. The induction hypothesis is that $s \leq_\Vdash t_s$, for each $s \in H(s)$. Let $s \in H(u)$, and let $\varphi \in \Lang_\omega$, such that $b(u) + \varphi \vdashNH b(s)$. Suppose that $\theta \in \MT(s)$. By $b(u) = b(v)$ and $b(s) = b(t_s)$, we have $b(v) + \varphi \vdashNH b(t_s)$. Moreover, by $s \leq_\Vdash t_s$, we have $\theta \in \MT(t_s)$. So by (\ref{Dfn:MT_closed_children}) in the definition of $\MT(v)$, we have $(\varphi \rightarrow \theta) \in \MT(v)$, as desired. 
\end{proof}

Let $0 < n < \omega$. A \emph{path in $u$ (of length $n$)} is an $n$-tuple $\langle u_0, \cdots, u_n \rangle$, such that $u_0 = u$, and for each $0 \leq i < n$, we have $u_{i+1} \in H(u_i)$. 
$\langle u_0, \cdots, u_n \rangle$ is \emph{connected by} a sequence of formulas $\langle \varphi_1, \cdots, \varphi_n \rangle$ if for each $0 \leq i < n$, $b(u_i) + \varphi_{i+1} \vdashNH b(u_{i+1})$. 

\begin{dfn}
We define the following operations for any basic intentic state $c$, for any intentic states $u, u'$ over $\Lang$, any set $U$ of intentic states over $\Lang$, any $0 < n < \omega$, and any path $\langle u_0, \cdots, u_n \rangle$ in $u$: 
\begin{align*}
	\Repl(\langle u_0, \cdots, u_n \rangle, u') &=_\df
	\begin{cases}
		u' & \textrm{if $n=0$.} \\
		\big\langle b(u_0), (H(u_0) \setminus \{ u_1 \}) \cup \{ \Repl(\langle u_1, \cdots, u_n \rangle, u') \} \big\rangle & \textrm{if $n > 0$.} \\
	\end{cases} \\
	\Appe(u, U) &=_\df \big\langle b(u), H(u) \cup U \big\rangle \\
	\Appe(\langle u_0, \cdots, u_n \rangle, U) &=_\df \Repl \big( \langle u_0, \cdots, u_n \rangle, \Appe(u_n, U) \big) \\
	\mathrm{Boost}(u, c) &=_\df \big\langle b(u) \cup c, \{\mathrm{Boost}(v, c) \mid v \in H(u)\} \big\rangle
\end{align*}
\end{dfn}

\begin{prop}\label{Prop:State_Prop}
	Let $c$ be a basic intentic state over $\Lang$, let $u, u'$ be intentic states over $\Lang$, let $U$ be a set of intentic states over $\Lang$, let $0 < n < \omega$, and let $\langle u_0, \cdots, u_n \rangle$ be a path in $u$, connected by $\langle \varphi_1, \cdots, \varphi_n \rangle$.
	\begin{enumerate}[{\rm (a)}]
		\item\label{Prop:State_Prop_Rep} Let $v = \Repl(\langle u_0, \cdots, u_n \rangle, u')$. If $u_n \leq_F u'$, then $v$ is an intentic state, such that $u \leq_F v$.
		\item\label{Prop:State_Prop_App} Let $v = \Appe(\langle u_0, \cdots, u_n \rangle, U)$. If $\forall s \in U \dt \exists \varphi_s \in \Lang_\omega \dt \big( b(u_n) + \varphi_s = b(s) \big)$, then $v$ is an intentic state, such that $u \leq_F v$.
		\item\label{Prop:State_Prop_Boost} $\mathrm{Boost}(u, c)$ is an intentic state.
	\end{enumerate}
\end{prop}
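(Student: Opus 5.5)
Parts (a) and (b) go by induction on the path length $n$, and part (c) by $\in$-induction on $u$. Throughout I use Lemma~\ref{Lem:Fine_ext_makes_true} only for transitivity of $\leq_F$, and reflexivity of $\leq_F$, which is an immediate $\in$-induction from the definition. Paths of length $0$ are admitted as the base case of the recursion defining $\Repl$.

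For (a), I induct on $n$, stating the claim for all paths and all $u'$ with $u_n \leq_F u'$. If $n = 0$, then $v = u'$, and $u = u_0 \leq_F u'$ holds by hypothesis. If $n > 0$, let $v_1 = \Repl(\langle u_1, \cdots, u_n\rangle, u')$. Since $\langle u_1, \cdots, u_n\rangle$ is a path in $u_1$ connected by $\langle \varphi_2, \cdots, \varphi_n\rangle$, the induction hypothesis gives that $v_1$ is an intentic state with $u_1 \leq_F v_1$. In particular $b(v_1) = b(u_1) = b(u_0) + \psi$ for some $\psi$, because $u_1 \in H(u_0)$ and $u_0$ is an intentic state.

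Now $v = \langle b(u_0), (H(u_0)\setminus\{u_1\}) \cup \{v_1\}\rangle$. Its hypothetical set is finite. Each member either already lies in $H(u_0)$, and so satisfies the base-state condition, or is $v_1$, handled above. Hence $v$ is an intentic state. For $u \leq_F v$: the base states agree, each $s \in H(u_0)\setminus\{u_1\}$ is matched by itself using reflexivity, and $u_1$ is matched by $v_1$. The connecting formulas are never actually used.

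For (b), note that $\Appe(u_n, U) = \langle b(u_n), H(u_n)\cup U\rangle$. This is an intentic state by the hypothesis on $U$. Here I must assume $U$ is finite, and I flag that the statement's ``set'' should be read as ``finite set''. Moreover $u_n \leq_F \Appe(u_n, U)$: the base states are equal, and each $s \in H(u_n)$ is matched by itself. So (a) applies and yields (b). The case $n = 0$ is the direct computation just given.

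For (c), I use $\in$-induction on $u$. $\mathrm{Boost}(u,c)$ has hypothetical set $\{\mathrm{Boost}(v,c) \mid v \in H(u)\}$, which is finite, and each member is an intentic state by the induction hypothesis. For $b(u) \cup c$ to be a basic intentic state, I plan to read $b(u) \cup c$ as $[\{\sigma_{b(u)}, \sigma_c\}]$, since a literal union of closed sets need not be $\vdashNH$-closed. This is the point I expect to need care with, and the other order-$i$ issue (taking the maximum order) is routine.

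The main obstacle is the base-state condition. Given $b(v) = b(u) + \varphi$, I need
\[
[b(v) \cup c] = [b(u) \cup c] + \varphi .
\]
Both sides are the $\vdashNH$-closure of $b(u) \cup c \cup \{\varphi\}$, provided $+$ is likewise read as closure. Under that consistent reading the equation holds, and the induction closes. I would state this reading explicitly, or check that it is implicit in the paper's convention $b(u) + \varphi = b(s)$, where $b(s)$ is closed.
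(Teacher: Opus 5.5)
Your proof is correct and follows essentially the same route as the paper. Like the paper, you use induction on $n$ for (a), reduce (b) to (a) via $u_n \leq_F \Appe(u_n, U)$, and use $\in$-induction on $u$ for (c) via the identity $b(\mathrm{Boost}(u,c)) + \varphi = b(\mathrm{Boost}(v,c))$; the caveats you raise are all legitimate points the paper leaves implicit: finiteness of $U$, reading $\cup$ and $+$ as $\vdashNH$-closures, reflexivity of $\leq_F$, and taking $b(u_1) = b(u_0) + \psi$ from $u_0$ being an intentic state rather than from the connecting formula.
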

\begin{proof}
	(\ref{Prop:State_Prop_Rep}) This is proved by induction on $n$. For $n=0$ it is immediate. The induction hypothesis is that it holds for $n-1$. Let $w = \Repl(\langle u_1, \cdots, u_n \rangle, u')$. By the induction hypothesis, $w$ is an intentic state, such that $u_1 \leq_F w$. So $b(u) + \varphi_1 \vdashNH b(w)$. Note that $v = \big\langle b(u), (H(u) \setminus \{u_1\}) \cup \{w\} \big\rangle$. Since $u_1 \leq_F w$, $v$ is an intentic state, such that $u \leq_F v$. 
	
	(\ref{Prop:State_Prop_App}) Let $s \in U$ be arbitrary. Since $s$ is an intentic state and $b(u_n) + \varphi_s = b(s)$, we have that $v$ is an intentic state. By (\ref{Prop:State_Prop_Rep}), it suffices to show that $u_n \leq_F \Appe(u_n, U)$. But this is immediate from the definitions of $\leq_F$ and $\Appe$.	
	
	(\ref{Prop:State_Prop_Boost}) Let $v \in H(u)$ be arbitrary. By induction, we may assume that $\mathrm{Boost}(v, c)$ is an intentic state. Moreover, since $b(v) = b(u) + \phi$, for some $\phi$, we have that $b\big(\mathrm{Boost}(u, c)\big) + \phi = b(u) + c + \phi = b(v) + c = b\big(\mathrm{Boost}(v, c)\big)$, as desired.
\end{proof}

\subsection{Soundness and Completeness}

\begin{thm}[Soundness]\label{Thm:Soundness}
	Let $\Gamma$ be a finite subset of $\Lang_\omega$ and $\varphi \in \Lang_\omega$: 
	\begin{align*}
		\Gamma \vdash \varphi \Rightarrow \Gamma \Vdash \varphi
	\end{align*}
\end{thm}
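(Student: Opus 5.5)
We argue by induction on the height of a natural deduction derivation of $\varphi$ from $\Gamma$, proving a strengthened statement suited to open assumptions. Fix an intentic state $u$ with $u \Vdash \Gamma$. The statement to carry through the induction is: \emph{for every subderivation with conclusion $\psi$ and open assumptions $\Delta \subseteq \Gamma \cup \{\chi_1, \dots, \chi_n\}$, where $\chi_1,\dots,\chi_n$ are the hypotheses discharged below it, there is a fine extension $u'$ of $u$ and a path $\langle u'_0,\dots,u'_n\rangle$ in $u'$ connected by $\langle \chi_1,\dots,\chi_n\rangle$ such that $u'_n \Vdash \psi$.} The theorem is the case $n = 0$. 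Fine extensions compose by Lemma~\ref{Lem:Fine_ext_makes_true}, and made-true formulas persist along them. So the fine extensions produced for the premises of a rule can be built one after another. Each is obtained with Proposition~\ref{Prop:State_Prop}(\ref{Prop:State_Prop_Rep}) by replacing a substate along the relevant path.

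For every rule already in $\vdashNH$ (the $\wedge$, $\vee I$, $\rightarrow E$, $\forall$, $\exists I$ rules, $\bot$-rules and LEM), the premises hold at the same node $u'_n$, possibly after successive fine extensions. Clause~(\ref{Dfn:MT_closed_loc_log}) of Definition~\ref{Dfn:MT} then gives the conclusion there. For a leaf that is a discharged hypothesis $\chi_n$, we use Proposition~\ref{Prop:State_Prop}(\ref{Prop:State_Prop_App}) to append the hypothetical state $\langle b(u'_{n-1}) + \chi_n, \emptyset\rangle$, taking its base as $[b(u'_{n-1}) \cup \{\chi_n\}]$. This makes $\chi_n$ true at the new last node of the path. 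For a leaf from $\Gamma$, truth at $u$ has to be propagated to deeper nodes. Base formulas of $u$ are contained in the bases of all descendants, since $b(s) = b(u) + \varphi$, but a general member of $\MT(u)$ is not. This is where $\mathrm{Boost}$ enters: when we build hypothetical states we boost them by the relevant base, using Proposition~\ref{Prop:State_Prop}(\ref{Prop:State_Prop_Boost}). Alternatively, we weaken the invariant to require only $\vdashNH$-consequences of $\sigma_{b}$ together with the $\chi_i$. I would try to arrange that the new hypothetical nodes have base $[b(u'_{n-1}) \cup \{\chi_n\}]$ and that formulas of $\MT(u)$ used later are re-derived inside them.

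The hypothetical rules are then handled as follows. For ${\rightarrow} I$, the induction hypothesis gives a path ending at $u'_n \Vdash \psi$ with $b(u'_{n-1}) + \chi_n \vdashNH b(u'_n)$. Clause~(\ref{Dfn:MT_closed_children}) applied at $u'_{n-1}$ then yields $\chi_n \rightarrow \psi \in \MT(u'_{n-1})$, which shortens the path by one. For $\vee E$ and $\exists E$, we first obtain $\chi_n \rightarrow \theta$ and $\chi'_n \rightarrow \theta$ (respectively $\varphi(c_\varphi) \rightarrow \theta$) in the manner just described. We then apply $\vee E'$ (respectively $\exists E'$) inside $\MT$. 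The eigenvariable in $\exists E$ is instantiated to the witness constant $c_\varphi \in \Lang_\omega$, which requires a routine substitution lemma for derivations.

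The main obstacle I expect is the leaf case for $\Gamma$ and for outer hypotheses. We need to know that formulas made true at an ancestor remain available at descendant hypothetical nodes, while each newly built node still has the form $b(\text{parent}) + \varphi$ exactly, with equality rather than mere derivability. My first attempt is to place the needed assumption formulas into the base before branching. Concretely, we would replace $u$ by a fine extension whose hypothetical children are boosted with the finitely many formulas of $\Gamma$. If that breaks the equality $b(u)+\varphi = b(s)$, I would instead adjust the induction invariant to require only $b(u'_{i}) + \chi_{i+1} \vdashNH b(u'_{i+1})$, which is exactly what clause~(\ref{Dfn:MT_closed_children}) demands.
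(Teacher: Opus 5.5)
\textbf{There is a genuine gap.} It is the one you flag yourself as the main obstacle. You never construct a hypothetical node that is connected to its parent $w$ by the hypothesis $\chi$ (so that clause (ii) yields $\chi\to\psi$ at $w$) and at which the formulas of $\Gamma$ are still made true. Members of $\MT(u)$ obtained via clause (ii) are not inherited by children, and neither concrete fix you propose works.

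Take $u=\langle[\varnothing],\{\langle[\{p\}],\varnothing\rangle\}\rangle$, so that $u\Vdash p\to p$. Let $\Gamma=\{p\to p\}$ and derive $r\to(r\wedge(p\to p))$.
\begin{itemize}
\item At your node $\langle[\{r\}],\varnothing\rangle$ the $\Gamma$-leaf fails, because $r\vdashNH p\to p$ does not hold. To see this, make $p,r$ true and read $\alpha\to\beta$ as true iff $\alpha$ is false. This validates every rule of $\vdashNH$, LEM included, and falsifies $p\to p$. So your weakened invariant ($\vdashNH$-consequences of $\sigma_b$ and the $\chi_i$) cannot handle $\Gamma$-leaves.
\item Putting $\Gamma$ into the bases does not help either. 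For existing children it is not a fine extension, since $\leq_F$ requires equal bases. For the new child it gives base $[\{r,p\to p\}]$, and then $b(u)+r\vdashNH b(\text{child})$ fails. That derivability is exactly what clause (ii) needs, and exactly what your fallback invariant demands.
\end{itemize}
There is also a smaller problem. Your invariant only asserts that \emph{some} fine extension and \emph{some} path exist. The two premises of ${\rightarrow}E$ may therefore end at different children, connected by the same $\chi_i$ but possibly with different bases. The invariant gives you no way to continue the induction from a prescribed node.

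\textbf{The missing idea} is the one you mention only in passing. The new node should be $\mathrm{Boost}(w,[\{\chi\}])$: a copy of the \emph{whole} parent state, hypothetical states included, with $\chi$ added to every base.
\begin{itemize}
\item Its base is exactly $[b(w)+\chi]$.
\item It is an intentic state by Proposition~\ref{Prop:State_Prop}(c).
\item $\MT(w)\subseteq\MT(\mathrm{Boost}(w,c))$, by $\in$-induction. Using the induction hypothesis $\MT(s)\subseteq\MT(\mathrm{Boost}(s,c))$ for $s\in H(w)$, the set $\MT(\mathrm{Boost}(w,c))$ satisfies (i)--(iii) for $w$, since $b(w)+\varphi\vdashNH b(s)$ implies $[b(w)\cup c]+\varphi\vdashNH[b(s)\cup c]$.
\end{itemize}
In the example, $\mathrm{Boost}(u,[\{r\}])$ has the child $\langle[\{r,p\}],\varnothing\rangle$ and so makes $p\to p$ true.

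With this lemma the path bookkeeping becomes unnecessary. Take as induction hypothesis $\Delta\Vdash\psi$ for the open assumptions $\Delta$ of each subderivation, quantified over all states, as the paper does.
\begin{itemize}
\item For ${\rightarrow}I$, apply it to $\mathrm{Boost}(u,[\{\psi\}])$, which makes $\Gamma\cup\{\psi\}$ true. Then attach the resulting fine extension $v_1$, whose base is still $[b(u)+\psi]$, to $u$ with $\Appe$.
\item $\vee E$ and $\exists E$ go the same way, followed by $\vee E'$ and $\exists E'$.
\item Multi-premise rules are handled at a single node by taking the union of the hypothetical states.
\end{itemize}
Note that the paper's own proof applies the induction hypothesis to $\langle b(u)+\psi,\varnothing\rangle$. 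That state suffers from precisely the defect you identified, and the Boost lemma is what repairs it.
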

\begin{proof}
	Assume that $\Gamma \vdash \varphi$. Let $u$ be an intentic state, such that $u \Vdash \Gamma$. We need to construct a fine extension $v$ of $u$, such that $v \Vdash \varphi$. This is proved by induction on the complexity of natural induction proofs for $\vdash$. Thus, let $p$ be a proof witnessing $\Gamma \vdash \varphi$. Let $P$ be the set of proofs of the assumptions of the last rule, say $R$, applied in $p$. Each $q \in P$ witnesses a deductive relation $\Gamma_q \vdash \varphi_q$. By induction, we may assume that $\Gamma_q \Vdash \varphi_q$, for each $q \in P$. We proceed by cases, as to which rule $R$ is.
	
	Suppose that $R$ is ${\rightarrow}I$, with the conclusion $\varphi$ being $(\psi \rightarrow \theta)$. Then the sole proof $q \in P$ witnesses $\Gamma, \psi \vdash \theta$. By the induction hypothesis, $\langle b(u) + \psi, \varnothing \rangle$ has a fine extension $v_1$, such that $v_1 \Vdash \theta$. It now follows from Proposition \ref{Prop:State_Prop}(\ref{Prop:State_Prop_App}) that $v = \Appe(u, \{v_1\})$ is a fine extension of $u$, and it follows from (\ref{Dfn:MT_closed_children}) in the definition of $\MT(v)$ that $v \Vdash (\psi \rightarrow \theta)$, as desired.
	
	Suppose that $R$ is $\vee E$. We then have formulas $\psi, \theta$, a proof $q_1 \in P$ witnessing $\Gamma \vdash \psi \vee \theta$, a proof $q_2 \in P$ witnessing $\Gamma, \psi \vdash \varphi$, and a proof $q_3 \in P$ witnessing $\Gamma, \theta \vdash \varphi$. By the induction hypothesis, $u$ has a fine extension $v_1$, such that $v_1 \Vdash \psi \vee \theta$; $\big\langle b(u) + \psi, \varnothing \big\rangle$ has a fine extension $v_2$, such that $v_2 \Vdash \varphi$; and $\big\langle b(u) + \theta, \varnothing \big\rangle$ has a fine extension $v_3$, such that $v_3 \Vdash \varphi$. Now we let $v = \Appe\big( v_1, \{v_2, v_3\} \big)$. By Proposition \ref{Prop:State_Prop}(\ref{Prop:State_Prop_App}), we have $v_1 \leq_F v$. So by Lemma \ref{Lem:Fine_ext_makes_true}, $u \leq_F v$, and $v \Vdash \psi \vee \theta$. Moreover, by (\ref{Dfn:MT_closed_children}) in the definition of $\MT(v)$, we have $v \Vdash \psi \rightarrow \varphi$, and $v \Vdash \theta \rightarrow \varphi$. So by (\ref{Dfn:MT_closed_loc_log}) in the definition of $\MT(v)$, and the rule $\vee E$ in $\vdashNH$, we have $v \Vdash \varphi$.
	
	Suppose that $R$ is $\exists E$. We then have a formula $\psi(x)$, a proof $q_1 \in P$ witnessing $\Gamma \vdash \exists x \psi(x)$, and a proof $q'_2 \in P$ witnessing $\Gamma, \psi(a) \vdash \varphi$, where $a$ is a fresh variable not appearing freely in $\varphi$. Let $q_2$ be the proof obtained from $q'_2$ by replacing each occurrence of $a$ by a constant $c_\psi$ not appearing in $p$ nor in $b(u)$. Note that $q_2$ witnesses $\Gamma, \psi(c_\varphi) \vdash \varphi$. By the induction hypothesis, $u$ has a fine extension $v_1$, such that $v_1 \Vdash \exists x \psi(x)$; and $\big\langle b(u) + \psi(c_\psi), \varnothing \big\rangle$ has a fine extension $v_2$, such that $v_2 \Vdash \varphi$. Now we let $v = \Appe\big( v_1, \{v_2\} \big)$. By Proposition \ref{Prop:State_Prop}(\ref{Prop:State_Prop_App}), we have $v_1 \leq_F v$, and by Lemma \ref{Lem:Fine_ext_makes_true}, $u \leq_F v$. Moreover, by (\ref{Dfn:MT_closed_children}) in the definition of $\MT(v)$, we have $v \Vdash \exists x \varphi(x)$, and $v \Vdash \psi(c_\psi) \rightarrow \varphi$. So by (\ref{Dfn:MT_closed_loc_log}) in the definition of $\MT(v)$, and the rule $\exists E$ in $\vdashNH$, we have $v \Vdash \varphi$.
	
	Suppose that $R$ is any other rule of natural deduction. Then there are no undischarged assumptions to handle. So for each $q \in P$, there is $\psi_q$, such that $q$ witnesses $\Gamma \vdash \psi_q$. Thus, by the induction hypothesis, for each $q \in P$, there is a fine extension $v_q$ of $u$, such that $v_q \Vdash \psi_q$. Let $v = \Appe\big(u, \bigcup_{r \in P} H(v_r) \big)$. Note that for each $q \in P$, we have $v = \Appe\big(v_q, H(u) \cup \bigcup_{r \in P} H(v_r) \big)$. So by Proposition \ref{Prop:State_Prop}(\ref{Prop:State_Prop_App}), $v$ is a fine extension of $u$, and of $v_q$, for each $q \in P$. Thus, by Lemma \ref{Lem:Fine_ext_makes_true}, $v \Vdash \psi_q$, for each $q \in P$. Now, since $R$ is admitted by $\vdashNH$, we have by (\ref{Dfn:MT_closed_loc_log}) in the definition of $\MT(v)$ that $v \Vdash \varphi$.
\end{proof}

\begin{lemma}\label{Lem:vdash_MT}
	Let $u$ be an intentic state over $\Lang$. Then $b(u) \vdash \MT(u)$.
\end{lemma}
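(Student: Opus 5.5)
We prove the lemma by $\in$-induction on $u$, using that $\MT(u)$ is the $\subseteq$-least set satisfying clauses (\ref{Dfn:MT_base})--(\ref{Dfn:MT_closed_loc_log}) of Definition \ref{Dfn:MT}. Let $X = \{\psi \in \Lang_\omega \mid b(u) \vdash \psi\}$. It suffices to show that $X$ satisfies the three closure conditions. Then minimality gives $\MT(u) \subseteq X$, i.e.\ $b(u) \vdash \MT(u)$.

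Clause (\ref{Dfn:MT_base}) is trivial: every element of $b(u)$ is derivable from $b(u)$. Clause (\ref{Dfn:MT_closed_loc_log}) holds because every rule of $\vdashNH$ is derivable in $\vdash$. The rules of $\vdashNH$ are a subset of the natural deduction rules, together with $\vee E'$ and $\exists E'$. The rule $\vee E'$ is derivable in $\vdash$ by $\vee E$ and $\rightarrow E$. For $\exists E'$: from $\exists x\varphi(x)$ and $\varphi(c_\varphi)\rightarrow\theta$ we need $\theta$. Here I would rely on the intended reading of the constants $c_\varphi$ as witnesses. Concretely, I would treat $\vdash$ over $\Lang_\omega$ as including the witnessing axioms $\exists x\varphi(x)\rightarrow\varphi(c_\varphi)$, or else note that the paper's soundness proof already uses this identification. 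Given that, $\exists E'$ is derivable. So $X$ is closed under $\vdashNH$.

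The main step is clause (\ref{Dfn:MT_closed_children}). Let $s \in H(u)$ and $\varphi$ with $b(u) + \varphi \vdashNH b(s)$, and let $\theta \in \MT(s)$. By the induction hypothesis applied to $s$, we have $b(s) \vdash \theta$. Since $b(s)$ is generated by the single formula $\sigma_s$, we get $\sigma_s \vdash \theta$. From $b(u) + \varphi \vdashNH b(s)$ we get $b(u) + \varphi \vdash \sigma_s$, using the previous paragraph's comparison of $\vdashNH$ with $\vdash$. Chaining, $b(u), \varphi \vdash \theta$, and then ${\rightarrow}I$ in full natural deduction gives $b(u) \vdash \varphi \rightarrow \theta$. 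This is exactly where hypothetical reasoning, absent from $\vdashNH$, is needed, and it is available in $\vdash$. Since $b(u)$ is $\vdashNH$-generated by the finite set $\{\sigma_{b(u)}\}$, all derivations use only finitely many premises, so no compactness issue arises.

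The main obstacle is justifying that $\exists E'$ is sound for $\vdash$ with the fresh constants $c_\varphi$. If the witnessing axioms are not taken as part of $\vdash$, I would first try to strengthen the induction to track which constants are ``fresh'' relative to $b(u)$. In that version, $\exists E'$ applied with $c_\varphi$ not occurring in $b(u)$ or $\theta$ is just $\exists E$, and I would argue that any occurrence of $c_\varphi$ can be renamed to a variable and discharged. If that bookkeeping fails in general, I would fall back on stating explicitly the convention that $\vdash$ on $\Lang_\omega$ includes the Henkin axioms, which is consistent with how the Soundness proof treats $c_\psi$.
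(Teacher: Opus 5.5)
Your argument is essentially the paper's: the paper writes $\varphi\in\MT(u)$ as a $\vdashNH$-consequence of two kinds of formulas, namely base formulas and implications $\theta\rightarrow\chi'$ supplied by hypothetical states. It converts the implications into $\vdash$-derivations from $b(u)$ via the induction hypothesis and ${\rightarrow}I$, and then reads the whole $\vdashNH$-derivation as a $\vdash$-derivation, which is your minimality argument in different packaging (your induction hypothesis $b(s)\vdash\MT(s)$ even fits clause (ii)'s hypothesis $b(u)+\varphi\vdashNH b(s)$ better than the paper's does). The $\exists E'$ issue you flag is a genuine gap that the paper passes over silently, and the Henkin convention is forced rather than optional: with $b(u)=[\{\exists x P(x)\}]$ and $H(u)=\{s\}$, $b(s)=[\{\exists x P(x),P(c_P)\}]$, $H(s)=\varnothing$, clause (ii) puts $P(c_P)\rightarrow P(c_P)$ into $\MT(u)$ and $\exists E'$ then gives $P(c_P)\in\MT(u)$, whereas every member of $b(u)$ is a plain classical consequence of $\exists x P(x)$ (no implication is ever available at top level there, so $\vee E'$, ${\rightarrow}E$ and $\exists E'$ never fire), hence $b(u)\nvdash P(c_P)$ without the witnessing axioms and your freshness-tracking alternative cannot succeed.
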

\begin{proof}
	Let $\varphi \in \MT(u)$. We need to show that $b(u) \vdash \varphi$. We proceed by induction on the complexity of $u$. The induction hypothesis is that for each $s \in H(u)$, each $\theta$ such that $b(s) = b(u) + \theta$, and each $\psi$ such that $s \Vdash \psi$, there is a proof $p_{s, \theta, \psi}$ witnessing $b(u), \theta \vdash \psi$.
	
	Since $\MT(u)$ is the least set closed under the conditions (\ref{Dfn:MT_base})--(\ref{Dfn:MT_closed_loc_log}) in its definition, and $\varphi \in \MT(u)$, we have that there is a proof $p_{u, \Delta, \varphi}$ witnessing $\Delta \vdashNH \varphi$, for some finite set of formulas $\Delta$, such that for each $\chi \in \Delta$, either of the following holds:
	\begin{itemize}
		\item $\chi \in b(u)$, or 
		\item $\chi$ is the formula $(\theta \rightarrow \chi')$, for some $\theta, \chi'$, such that there is $s \in H(u)$, such that $b(u) + \theta = b(s)$ and $s \Vdash \chi'$. 
	\end{itemize}
	
	In the former case, let $p_{u, \chi}$ be the trivial proof witnessing $b(u) \vdashNH \chi$. In the latter case, the proof $p_{s, \theta, \chi'}$ witnessing $b(u), \theta \vdash \chi'$ can be followed by an application of ${\rightarrow} I$ to yield a proof $p_{u, \chi}$ witnessing $b(u) \vdash \chi$. Thus, each assumption $\chi \in \Delta$ in the proof $p_{u, \Delta, \varphi}$ can be replaced by the proof $p_{u, \chi}$, to yield a proof $p_{u, \varphi}$ witnessing $b(u) \vdash \varphi$, as desired.
\end{proof}

\begin{cor}
	Let $u$ be an intentic state over $\Lang$. If $b(u)$ is $\vdash$-consistent, then $\MT(u)$ is $\vdash$-consistent.
\end{cor}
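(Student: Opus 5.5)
The corollary follows from Lemma~\ref{Lem:vdash_MT}, so I would argue by contraposition. Suppose $\MT(u)$ is $\vdash$-inconsistent. Then there is a finite set $\Delta \subseteq \MT(u)$ with $\Delta \vdash \bot$; if inconsistency is instead formulated as deriving some $\psi$ together with $\neg\psi$, the same argument applies after one application of ${\rightarrow}E$.

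The key step is to transfer the inconsistency from $\MT(u)$ down to $b(u)$. By Lemma~\ref{Lem:vdash_MT}, for each $\chi \in \Delta$ there is a proof witnessing $b(u) \vdash \chi$. Since $\Delta$ is finite, we can take the proof of $\Delta \vdash \bot$ and replace each open assumption $\chi \in \Delta$ by the corresponding proof of $b(u) \vdash \chi$. This is just transitivity (cut) of natural deduction for $\vdash$, and it yields $b(u) \vdash \bot$. Hence $b(u)$ is $\vdash$-inconsistent, which contradicts the hypothesis.

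There is no real obstacle here; the substantive work was done in Lemma~\ref{Lem:vdash_MT}. The only points to check are these. First, $\vdash$ is full classical natural deduction, so the grafting of proofs is unproblematic, even though the proofs supplied by the lemma use ${\rightarrow}I$. Second, inconsistency is witnessed by a finite subset, which holds because proofs are finite. Third, formulas of $\Lang_\omega$ containing the constants $c_\varphi$ are treated uniformly by $\vdash$, so no special handling of these constants is required.
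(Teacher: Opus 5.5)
Your proof is correct and follows the paper's approach. The paper gives no written proof and treats the corollary as immediate from Lemma~\ref{Lem:vdash_MT}: every formula of $\MT(u)$ is a $\vdash$-consequence of $b(u)$, so grafting those finitely many proofs onto a derivation of $\bot$ from $\MT(u)$ yields $b(u) \vdash \bot$, exactly as you argue.
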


\begin{thm}[Completeness]\label{Thm:Completeness}
	Let $\Gamma$ be a finite subset of $\Lang_\omega$ and $\varphi \in \Lang_\omega$: 
	\begin{align*}
		\Gamma \vdash \varphi \Leftarrow \Gamma \Vdash \varphi
	\end{align*}
\end{thm}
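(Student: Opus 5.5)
We prove the contrapositive-free direct version. Assume $\Gamma \Vdash \varphi$. The natural witness is the \emph{canonical} state for $\Gamma$: let $\sigma = \bigwedge \Gamma$ (or $\top$ if $\Gamma = \varnothing$), and let $u_\Gamma = \big\langle [\{\sigma\}], \varnothing \big\rangle$. Since $\Gamma$ is finite, $[\{\sigma\}]$ is a basic intentic state of some order $i$, namely the least $i$ with $\Gamma \subseteq \Lang_i$. Because $\wedge E$ is a rule of $\vdashNH$, each $\gamma \in \Gamma$ lies in $b(u_\Gamma)$, and hence in $\MT(u_\Gamma)$ by clause (\ref{Dfn:MT_base}) of Definition \ref{Dfn:MT}. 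So $u_\Gamma \Vdash \Gamma$.

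By the hypothesis $\Gamma \Vdash \varphi$, there is a fine extension $u'$ of $u_\Gamma$ with $u' \Vdash \varphi$, that is, $\varphi \in \MT(u')$. By the definition of $\leq_F$ we have $b(u') = b(u_\Gamma) = [\{\sigma\}]$. Lemma \ref{Lem:vdash_MT} then gives $b(u') \vdash \MT(u')$, so $[\{\sigma\}] \vdash \varphi$.

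To finish, observe that every formula in $[\{\sigma\}]$ is $\vdashNH$-derivable from $\sigma$. Every rule of $\vdashNH$, including $\vee E'$ and $\exists E'$, is derivable in ordinary natural deduction ($\exists E'$ via $\exists E$ with $c_\varphi$ acting as a parameter, once its side conditions are checked). Hence $[\{\sigma\}] \vdash \varphi$ yields $\sigma \vdash \varphi$, and therefore $\Gamma \vdash \varphi$, since $\Gamma \vdash \sigma$ by $\wedge I$.

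The main obstacle is the bookkeeping around the constants $c_\varphi$. First, $\exists E'$ is derivable in $\vdash$ only when $c_\varphi$ does not occur in the other premises or the conclusion. Second, $\varphi$ and the hypothetical states of $u'$ may mention constants from $\Lang_\omega$ that are absent from $\Gamma$, so Lemma \ref{Lem:vdash_MT} is being applied over the extended language. I would handle this by reading $\vdash$ throughout as deduction over $\Lang_\omega$, with the $c_\psi$ treated as fresh parameters, which is also how the soundness proof uses them. If needed, I would replace the $c$'s not occurring in $\Gamma \cup \{\varphi\}$ by fresh variables and then eliminate them by a standard $\exists E$ argument. This is the step I would check most carefully; the rest is direct.
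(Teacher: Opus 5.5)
Your route is exactly the paper's. It takes $\langle \ThA{\vdashNH}(\Gamma), \varnothing \rangle$ (your $u_\Gamma$, since $[\{\sigma\}] = [\Gamma]$), notes that a fine extension keeps this base, applies Lemma~\ref{Lem:vdash_MT}, and concludes ``So $\Gamma \vdash \varphi$''. However, the step you flagged is a genuine gap, and your patch does not close it.

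Passing from $[\{\sigma\}]$ back to $\sigma$ requires every $\vdashNH$-step to be a $\vdash$-step. The proof of Lemma~\ref{Lem:vdash_MT} needs the same thing, since it reads a $\vdashNH$-derivation with $\vdash$-derivations plugged into its leaves as a $\vdash$-derivation. But $\exists E'$ as defined carries no eigen-constant condition, so there is nothing to ``check'', and the inclusion is false. For a unary $P \in \Lang$ put $c = c_{P(x)}$ and $\Gamma = \{\exists x P(x),\ P(c) \rightarrow \bot\}$. One application of $\exists E'$ puts $\bot$ into $[\Gamma]$. So $\bot \in \MT(u)$ whenever $u \Vdash \Gamma$, and each such $u$ is its own fine extension, whence $\Gamma \Vdash \bot$. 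Yet $\Gamma$ has a model (domain $\{a,b\}$, $P = \{a\}$, $c \mapsto b$), so $\Gamma \nvdash \bot$. Since $c$ occurs in $\Gamma$, neither reading the $c$'s as fresh parameters nor replacing those absent from $\Gamma \cup \{\varphi\}$ touches this case.

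Nor is the problem confined to $c$'s occurring in $\Gamma$. Append $\langle [b(u) + P(c)], \varnothing \rangle$ to any $u \Vdash \exists x P(x)$. Clause~(\ref{Dfn:MT_closed_children}) gives $P(c) \rightarrow P(c)$, and then $\exists E'$ gives $P(c)$. So $\{\exists x P(x)\} \Vdash P(c)$, although $\exists x P(x) \nvdash P(c)$. Starting from $u = \langle [\{\exists x P(x)\}], \varnothing \rangle$, the resulting state even refutes Lemma~\ref{Lem:vdash_MT}, because that base contains only classical consequences of $\exists x P(x)$.

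So the statement, and the paper's three-line proof with it, need a change of framework rather than more careful bookkeeping. The cleanest repair is to let $\vdash$ include the Henkin axioms $\exists x \psi(x) \rightarrow \psi(c_\psi)$. Then $\exists E'$ is just two applications of ${\rightarrow}E$, and Lemma~\ref{Lem:vdash_MT} and your argument go through verbatim. For $\Gamma \cup \{\varphi\} \subseteq \Lang$ you then recover plain $\Gamma \vdash \varphi$, because every $\Lang$-structure expands to a model of the Henkin axioms. That conservativity fact, most easily proved model-theoretically, is what your ``eliminate the $c$'s'' step actually needs. It is specifically classical: intuitionistically the Henkin axioms yield the unprovable $\exists y\,(\exists x P(x) \rightarrow P(y))$. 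The alternative is to impose an eigen-constant restriction on $\exists E'$ and then re-verify Lemma~\ref{Lem:vdash_MT} and soundness under it.
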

\begin{proof}
	Assume that $\Gamma \Vdash \varphi$. Then there is a fine extension $u$ of the intentic state $\big\langle \ThA{\vdashNH}(\Gamma), \varnothing \big\rangle$, such that $u \Vdash \varphi$. Since $b(u) = \ThA{\vdashNH}(\Gamma)$, we have by Lemma \ref{Lem:vdash_MT} that $\ThA{\vdashNH}(\Gamma) \vdash \varphi$. So $\Gamma \vdash \varphi$, as desired. 
\end{proof}

\section{Complexity of non-hypothetical logic}

This section defends the following conjectures, by drafting an algorithm for deciding the former:

\begin{conj}\label{Conj:Decidable}
	Non-hypothetical logic is decidable over the axioms of $\PA$,\footnote{For technical convenience, we take $\PA$ to be Peano Arithmetic in a purely relational language, which is axiomatized in Figure \ref{Fig:Rel_PA} below.} and more ambitiously, over any fixed theorem of $\PA$ taken as an additional axiom.
\end{conj}

A recurring feature of the proof-search procedure drafted below is the restriction to finitely many constants. In the purely relational axiomatization of $\PA$ adopted here, there are no function symbols and hence no nontrivial closed terms; in particular, the language itself contains no constants beyond those explicitly introduced. As a result, proof search over the axioms of relational $\PA$ does not give rise to an unbounded supply of new terms. This finiteness plays a crucial role in ensuring that the search space remains controlled. The more ambitious extension of the conjecture to arbitrary fixed theorems of $\PA$ motivates allowing finitely many additional constants, because from the perspective of intentic semantics, these constants refer to hypothesized objects under consideration. The algorithms below are therefore formulated so as to make explicit where finiteness of constants is relied upon, anticipating this stronger setting.

Proof search in the non-hypothetical setting typically halts when encountering an implication. Outside of implication introduction $(\rightarrow I)$, it is difficult to see how a derivation of an implication could contribute substantively to a closed proof, since implication elimination presupposes the availability of an antecedent that cannot itself be established hypothetically. As a consequence, proof search does not branch beyond implications in a way that would generate genuinely new information. This has the further effect that attempts to reason by induction are effectively cut short: the inductive step presupposes the derivation of an implication of the form $\varphi(x) \rightarrow \varphi(S(x))$, which cannot be discharged non-hypothetically and therefore cannot drive an unbounded proof search. Although induction axioms are present, their effective use in proof search is therefore blocked by the non-hypothetical restriction.

Similarly, the application of conjunction elimination does not appear to be proof-theoretically productive in this setting. Since elimination rules cannot introduce information not already present in axioms or previously derived closed formulas, pursuing such steps does not expand the space of available conclusions and can therefore be safely disregarded in a terminating search procedure.

Taken together, these observations suggest that normal non-hypothetical proofs should exhibit a strong form of subformula discipline, with proof search confined to formulas already occurring (positively) in the axioms or in the goal formula. Establishing this rigorously requires structural proof-theoretic methods, in particular normalization results adapted to the non-hypothetical setting, which are expected to underpin the decidability conjectures stated above.

The following algorithms are intended to indicate the sources of finiteness underlying the conjecture rather than to constitute a fully specified or implementable decision procedure. They should be read as a schematic proof-search discipline rather than as a complete algorithm.


\begin{algorithm}[H]
	\SetAlgoLined
	\SetKwFunction{Uni}{Uni}
	\SetKwFunction{Axiom}{Axiom}
	\SetKwFunction{PosSubAxiom}{PosSubAxiom}
	\SetKwFunction{Prove}{Prove}
	\SetKwFunction{ProveE}{ProveElim}
	\SetKwFunction{ProveI}{ProveIntro}
	\SetKwFunction{Axiomatic}{Axiomatic}
	\SetKwProg{Fn}{Function}{:}{}
	\Fn{\Axiom{$\varphi$}}{\tcc{Decides whether $\varphi$ is an axiom, up to variable substitution and closure under universal quantification.}}
	\Fn{\PosSubAxiom{$\varphi$}}{\tcc{Decides whether $\varphi$ occurs syntactically positively as a subformula of an axiom, up to variable substitution.}}
	\Fn{\Axiomatic{$p$}}{\tcc{Takes a proof as input and decides whether all its assumptions are axioms.}
		\caption{Helpers assumed to be available}
	}
\end{algorithm}

\begin{algorithm}
	\SetAlgoLined
	\SetKwFunction{Uni}{Uni}
	\SetKwFunction{Axiom}{Axiom}
	\SetKwFunction{PosSubAxiom}{PosSubAxiom}
	\SetKwFunction{Prove}{Prove}
	\SetKwFunction{ProveE}{ProveElim}
	\SetKwFunction{ProveI}{ProveIntro}
	\SetKwFunction{Axiomatic}{Axiomatic}
	\SetKwProg{Fn}{Function}{:}{}
	\Fn{\Prove{$\varphi$}}{
		\tcc{Returns a normal natural deduction of $\varphi$ in $\vdashNH$ from a fixed decidable set of axioms, if any, else returns $\bot$.}
		\If{\Axiom{$\varphi$}}{
			\Return{$\varphi$}
		}
		\ElseIf{\Axiomatic{\ProveI{$\varphi$}}}{
			\Return{\ProveI{$\varphi$}}
		}
		\ElseIf{\Axiomatic{\ProveE{$\varphi$}}}{
			\Return{\ProveE{$\varphi$}}
		}
		\Else{
			\Return{$\bot$}
		}
		\caption{Recursive proof-search for formula $\varphi$}
	}
\end{algorithm}

\begin{algorithm}
	\SetAlgoLined
	\SetKwFunction{Uni}{Uni}
	\SetKwFunction{Axiom}{Axiom}
	\SetKwFunction{PosSubAxiom}{PosSubAxiom}
	\SetKwFunction{Prove}{Prove}
	\SetKwFunction{ProveE}{ProveElim}
	\SetKwFunction{ProveI}{ProveIntro}
	\SetKwFunction{Axiomatic}{Axiomatic}
	\SetKwProg{Fn}{Function}{:}{}
	\Fn{\ProveI{$\varphi$}}{
		\If{$\varphi \equiv \psi \wedge \theta$}{
			\Return{
				\begin{prooftree}
					\AxiomC{\Prove{$\psi$}}
					\AxiomC{\Prove{$\theta$}}
					\BinaryInfC{$\psi \wedge \theta$}
				\end{prooftree}
			}
		}
		\ElseIf{$\varphi \equiv \psi \vee \theta$}{
			\If{\Prove{$\psi$} $\neq \bot$}{
				\Return{
					\begin{prooftree}
						\AxiomC{\Prove{$\psi$}}
						\UnaryInfC{$\psi \vee \theta$}
					\end{prooftree}
				}
			}
			\ElseIf{\Prove{$\theta$} $\neq \bot$}{
				\Return{
					\begin{prooftree}
						\AxiomC{\Prove{$\theta$}}
						\UnaryInfC{$\psi \vee \theta$}
					\end{prooftree}
				}
			}
			\Else{\Return{$\bot$}}
		}
		\ElseIf{$\varphi \equiv \forall x \psi(x)$}{
			\Return{
				\begin{prooftree}
					\AxiomC{\Prove{$\psi(x)$}}
					\UnaryInfC{$\forall x \psi(x)$}
				\end{prooftree}			
			}
		}
		\ElseIf{$\varphi \equiv \exists x \psi(x)$}{
			\tcc{Proof search is restricted to the finitely many constants occurring in axioms or previously introduced during the search.}
			\For{each such constant $c_\psi$}{
				\If{\Prove{$\psi(c_\psi)$} $\neq \bot$}{
					\Return{
						\begin{prooftree}
							\AxiomC{\Prove{$\psi(c_\psi)$}}
							\UnaryInfC{$\exists x \psi(x)$}
						\end{prooftree}
					}
				}
			}
		}
		\caption{Proof-search step by introduction rules}
	}
\end{algorithm}

\begin{algorithm}
	\SetAlgoLined
	\SetKwFunction{Uni}{Uni}
	\SetKwFunction{Axiom}{Axiom}
	\SetKwFunction{PosSubAxiom}{PosSubAxiom}
	\SetKwFunction{Prove}{Prove}
	\SetKwFunction{ProveE}{ProveElim}
	\SetKwFunction{ProveI}{ProveIntro}
	\SetKwFunction{Axiomatic}{Axiomatic}
	\SetKwProg{Fn}{Function}{:}{}
	\Fn{\ProveE{$\varphi$}}{
		\If{{\bf not} \PosSubAxiom{$\varphi$}}{
			\Return{$\bot$}
		}
		\ElseIf{\Axiom{\Uni{$\varphi$}}}{
			\Return{
				\begin{prooftree}
					\AxiomC{\Uni{$\varphi$}}
					\UnaryInfC{$\vdots$}
					\UnaryInfC{$\varphi$}
				\end{prooftree}
			}
		}
		\Else{
			\For{``each of the finitely many elimination possibilities given the finitely many axioms and the finitely many constants''}{``the proof-search for that elimination possibility''}
		}
		\caption{Proof-search step by elimination rules}
	}
\end{algorithm}

\begin{figure}[h!]\label{Fig:Rel_PA}
	\centering
	
	\vspace{1em}
	
	\textbf{Axioms of Classicality:}
	\begin{align*}
		&\varphi \vee \neg\varphi \tag{LEM}
	\end{align*}
	
	\textbf{Axioms of Equality:}
	\begin{align*}
		&\forall x\, (x = x) \tag{E1} \\
		&\forall x\, \forall y\, (x = y \rightarrow y = x) \tag{E2} \\
		&\forall x\, \forall y\, \forall z\, (x = y \land y = z \rightarrow x = z) \tag{E3} \\
		&\forall x\, \forall x'\, \forall y\, \forall y'\, \forall z\, \forall z'\, (x = x' \land y = y' \land z = z' \land \mathsf{Add}(x, y, z) \rightarrow \mathsf{Add}(x', y', z')) \tag{E4} \\
		&\forall x\, \forall x'\, \forall y\, \forall y'\, \forall z\, \forall z'\, (x = x' \land y = y' \land z = z' \land \mathsf{Mul}(x, y, z) \rightarrow \mathsf{Mul}(x', y', z')) \tag{E5} \\
		&\forall x\, \forall x'\, \forall y\, \forall y'\, (x = x' \land y = y' \land S(x, y) \rightarrow S(x', y')) \tag{E6}
	\end{align*}
	
	\vspace{1em}
	\textbf{Functionality and Totality Axioms:}
	\begin{align*}
		&\forall x\, \exists y\, S(x, y) \tag{SF1} \\
		&\forall x\, \forall y\, \forall z\, (S(x, y) \land S(x, z) \rightarrow y = z) \tag{SF2} \\
		&\forall x\, \forall y\, \exists z\, \mathsf{Add}(x, y, z) \tag{AF1} \\
		&\forall x\, \forall y\, \forall z\, \forall z'\, (\mathsf{Add}(x, y, z) \land \mathsf{Add}(x, y, z') \rightarrow z = z') \tag{AF2} \\
		&\forall x\, \forall y\, \exists z\, \mathsf{Mul}(x, y, z) \tag{MF1} \\
		&\forall x\, \forall y\, \forall z\, \forall z'\, (\mathsf{Mul}(x, y, z) \land \mathsf{Mul}(x, y, z') \rightarrow z = z') \tag{MF2}
	\end{align*}
	
	\vspace{1em}
	\textbf{Proper Axioms of Relational PA:}
	\begin{align*}
		&\forall x\, \neg S(x, 0) \tag{S1} \\
		&\forall x\, \forall y\, \forall z\, (S(x, z) \land S(y, z) \rightarrow x = y) \tag{S2} \\
		&\forall x\, \mathsf{Add}(x, 0, x) \tag{A1} \\
		&\forall x\, \forall y\, \forall y'\, \forall z\, \forall z'\, (S(y, y') \land \mathsf{Add}(x, y, z) \land S(z, z') \rightarrow \mathsf{Add}(x, y', z')) \tag{A2} \\
		&\forall x\, \mathsf{Mul}(x, 0, 0) \tag{M1} \\
		&\forall x\, \forall y\, \forall y'\, \forall z\, \forall u\, (S(y, y') \land \mathsf{Mul}(x, y, z) \land \mathsf{Add}(z, x, u) \rightarrow \mathsf{Mul}(x, y', u)) \tag{M2} \\
		&\varphi(0) \land \forall y\, \forall y'\, (S(y, y') \wedge \varphi(y) \rightarrow \varphi(y')) \rightarrow \forall x\, \varphi(x) \tag{IND}
	\end{align*}
	
	\caption{Axioms of Peano Arithmetic in a relational signature}
\end{figure}

\section{Comparison and Discussion}
\subsection{Relation to Linnebo's Potentialism}

\begin{que}
	How does the structure of all intentic states over $\Lang$, endowed with the relations $\leq_\Vdash$ and $\leq_F$, relate to the bi-modal potentialism of Linnebo? 
\end{que}


We show that this structure on intentic states satisfies the same axioms that Linnebo's bi-modal potentialism require of the two accessibility relations of its possible worlds frames:

\begin{thm}
	Let $\mathcal{S}$ be the structure of all intentic states over $\Lang$, endowed with the relations $\leq_\Vdash$ and $\leq_F$. We have:
	\begin{enumerate}[{\rm (a)}]
		\item $\leq_\Vdash$ is a preorder.
		\item $\leq_F$ is a {\em locally directed} preorder, that is, it is a preorder such that 
		\[\forall u, v, w \in \mathcal{S} \,\big( (u \leq_F v \wedge u \leq_F w) \rightarrow \exists s \in \mathcal{S} \, (v \leq_F s \wedge w \leq_F s) \big).\]
		\item The preorders relate as follows:
		\[ \forall u, v, w \in \mathcal{S} \,\big( (u \leq_\Vdash v \wedge u \leq_F w) \rightarrow \exists s \in \mathcal{S} \, (v \leq_F s \wedge w \leq_\Vdash s) \big). \]
	\end{enumerate}
\end{thm}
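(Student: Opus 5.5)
Parts (a) and (b) come almost directly from the definitions and Lemma \ref{Lem:Fine_ext_makes_true}; the real work is in (c).

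For (a), $\leq_\Vdash$ is defined by $\MT(u) \subseteq \MT(v)$. It is therefore reflexive and transitive because $\subseteq$ is.

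For (b), reflexivity of $\leq_F$ follows by $\in$-induction: $b(u) = b(u)$, and for each $s \in H(u)$ we take $t = s$ and use the induction hypothesis $s \leq_F s$. Transitivity is the first claim of Lemma \ref{Lem:Fine_ext_makes_true}. For local directedness, take $u \leq_F v$ and $u \leq_F w$. Then $b(v) = b(u) = b(w)$, and I set
\[ s = \Appe(v, H(w)) = \big\langle b(v), H(v) \cup H(w) \big\rangle . \]
Each $t \in H(w)$ satisfies $b(w) + \varphi_t = b(t)$ for some $\varphi_t$, and $b(w) = b(v)$. So by Proposition \ref{Prop:State_Prop}(\ref{Prop:State_Prop_App}) (the path-free version, exactly as used in the Soundness proof), $s$ is an intentic state with $v \leq_F s$. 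By the symmetric observation $s = \Appe(w, H(v))$, so also $w \leq_F s$. This is the same union trick used in the last case of the Soundness proof.

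For (c), take $u \leq_\Vdash v$ and $u \leq_F w$. The hypothesis $u \leq_\Vdash v$ gives no structural relationship between $v$ and $w$; in particular $b(v)$ may differ from $b(w) = b(u)$, so we cannot simply merge hypothetical states. The natural candidate is
\[ s = \Appe(v, U), \quad U = \{\mathrm{Boost}(t, b(v)) \mid t \in H(w)\}, \]
reading $\mathrm{Boost}(t, b(v))$ as the state whose base is $[b(t) \cup b(v)]$, so that it is again a basic intentic state. Proposition \ref{Prop:State_Prop}(\ref{Prop:State_Prop_Boost}) makes each element of $U$ an intentic state. Moreover, from $b(t) = b(w) + \varphi_t$ we get $b(\mathrm{Boost}(t,b(v))) = b(v) + \varphi_t$; this uses $b(w) = b(u) \subseteq \MT(u) \subseteq \MT(v)$ together with the closure properties of $[\cdot]$, and needs a small check that $\MT$-membership really yields containment of bases up to $[\cdot]$. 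Granting this, Proposition \ref{Prop:State_Prop}(\ref{Prop:State_Prop_App}) gives that $s$ is an intentic state and $v \leq_F s$.

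It remains to show $w \leq_\Vdash s$, i.e.\ $\MT(w) \subseteq \MT(s)$, which is the main obstacle. Since $\MT(w)$ is the least set satisfying (i)--(iii), it suffices to verify those closure conditions for $\MT(s)$:
\begin{itemize}
\item (iii) holds by definition of $\MT(s)$.
\item (i) needs $b(w) \subseteq \MT(s)$, which follows from $b(w) = b(u) \subseteq \MT(u) \subseteq \MT(v) \subseteq \MT(s)$, using Lemma \ref{Lem:Fine_ext_makes_true} for the last inclusion.
\item (ii) needs: if $t \in H(w)$, $b(w) + \varphi \vdashNH b(t)$ and $\theta \in \MT(t)$, then $(\varphi \rightarrow \theta) \in \MT(s)$.
\end{itemize}
For (ii) I would prove an auxiliary lemma by $\in$-induction: $\MT(t) \subseteq \MT(\mathrm{Boost}(t, c))$, i.e.\ boosting is monotone for truthmaking. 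This goes through because boosting only enlarges every base along the tree, and conditions (i)--(iii) are monotone in the bases. Given that lemma, $\theta \in \MT(\mathrm{Boost}(t,b(v)))$. Also $b(v) + \varphi \vdashNH b(\mathrm{Boost}(t,b(v)))$, since $b(v) \supseteq b(w)$ up to closure. Clause (ii) for $s$ then yields $(\varphi \rightarrow \theta) \in \MT(s)$.

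The delicate points are the bookkeeping that $\mathrm{Boost}$ produces legitimate basic states and the exact form $b(\cdot) + \varphi = b(\cdot)$ required of hypothetical states; both may require normalising bases via $[\cdot]$.
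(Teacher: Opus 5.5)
Parts (a) and (b) are fine; in (b) your witness $\Appe(v, H(w)) = \langle b(u), H(v) \cup H(w) \rangle$ is exactly the paper's. The gap is in (c), at the ``small check'' you defer. Two steps need $b(w) \subseteq [b(v)]$, i.e.\ $b(v) \vdashNH \sigma_{b(w)}$: the claim $b(\mathrm{Boost}(t,b(v))) = [b(v) + \varphi_t]$, and the step $b(v) + \varphi \vdashNH b(\mathrm{Boost}(t,b(v)))$ in your clause (ii).

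This inclusion does not follow from $u \leq_\Vdash v$, which only gives $b(w) = b(u) \subseteq \MT(v)$. In general $\MT(v)$ is much larger than $[b(v)]$. It contains implications contributed by the hypothetical states of $v$ through clause (ii), and $\vdashNH$ has no ${\rightarrow}I$, so it cannot recover them from $b(v)$. For example, let $b(v) = [\{q\}]$ and give $v$ one hypothetical state with base $[\{q,p\}]$. Then $p \rightarrow p \in \MT(v)$, so $u = \langle [\{p \rightarrow p\}], \varnothing \rangle \leq_\Vdash v$. Yet $p \rightarrow p \notin [\{q\}]$: reading every $A \rightarrow B$ as $\neg A$, with each $c_\varphi$ naming a witness, validates all rules of $\vdashNH$ but makes $p \rightarrow p$ false when $p, q$ are true.

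Now let $w$ have base $b(u)$ and one hypothetical state $t$ with base $[\{p \rightarrow p, r\}]$, connected by $r$. Then $r \rightarrow r \in \MT(w)$. But $\mathrm{Boost}(t,b(v))$, with base $[\{q, p \rightarrow p, r\}]$, cannot be connected to $b(v)$ by $r$, by the same argument. It can only be connected by something like $\sigma_{b(w)} \wedge r$, and then clause (ii) for $s$ yields $(\sigma_{b(w)} \wedge r) \rightarrow r$, not $r \rightarrow r$. Passing from $(\sigma_{b(w)} \wedge \varphi) \rightarrow \theta$ and $\sigma_{b(w)}$ to $\varphi \rightarrow \theta$ is a currying step that needs ${\rightarrow}I$. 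So attaching boosted \emph{children} of $w$ directly under $v$ does not work in general.

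The paper adds one more level. It takes $s = \Appe\big(v, \{\mathrm{Boost}(w, b(v))\}\big)$, with the single new hypothetical state connected to $b(v)$ by $\sigma_{b(w)}$. This is legitimate, since $[b(v) + \sigma_{b(w)}] = [b(v) \cup b(w)]$.

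\begin{itemize}
\item Inside $\mathrm{Boost}(w,b(v))$ the base already contains $b(w)$, so its children are connected by the original formulas.
\item Your monotonicity lemma (which the paper uses tacitly) then gives $\MT(w) \subseteq \MT(\mathrm{Boost}(w,b(v)))$.
\item Clause (ii) puts $\sigma_{b(w)} \rightarrow \phi$ into $\MT(s)$ for every $\phi \in \MT(w)$.
\item Since $\sigma_{b(w)} \in b(u) \subseteq \MT(u) \subseteq \MT(v) \subseteq \MT(s)$, ${\rightarrow}E$ inside $\MT(s)$ yields $\phi \in \MT(s)$.
\end{itemize}

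The point is that $\sigma_{b(w)}$ only has to be \emph{made true} by $v$, not lie in $b(v)$. Modus ponens is applied within $\MT(s)$, not inside the connection condition $b(\cdot) + \varphi \vdashNH b(\cdot)$.
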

\begin{proof}
	(a) is immediate from the definition of $\leq_\Vdash$. 
	
	(b): It is immediate from Lemma \ref{Lem:Fine_ext_makes_true} that $\leq_\Vdash$ is a preorder. Suppose that $u, v, w \in \mathcal{S}$ and $u \leq_F v \wedge u \leq_F w$. Let $s = \langle b(u), H(v) \cup H(w) \rangle$. It is immediate from the definition of $\leq_F$ that $v \leq_F s \wedge w \leq_F s$.
	
	(c): Suppose that $u, v, w \in \mathcal{S}$ and $u \leq_\Vdash v \wedge u \leq_F w$. Let $s = \mathrm{App}\big(v, \{\mathrm{Boost}(w, b(v))\}\big)$. By Proposition \ref{Prop:State_Prop}(\ref{Prop:State_Prop_Boost}), $s$ is a fine extension of $v$. Now, suppose that $\phi \in \MT(w)$. Then $\phi \in \MT\big(\mathrm{Boost}(w, b(v))\big)$. Hence, $\sigma_{b(w)} \rightarrow \phi \in \MT(s)$. But $b(w) = b(u)$, so since $u \leq_\Vdash v$, we have $\sigma_{b(w)} \in \MT(v) \subseteq \MT(s)$. Therefore, $\phi \in \MT(s)$, as desired.
\end{proof}




\end{document}